\def\Kset{\mathbb{K}}
\def\Rset{\mathbb{R}}
\def\argmin{\mathop{\rm argmin}}
\def\Tset{\Rset_+ \cup \{\infty\}}
\newcommand{\bi}{\begin{itemize}}
\newcommand{\ei}{\end{itemize}}
\newcommand{\be}{\begin{enumerate}}
\newcommand{\ee}{\end{enumerate}}
\newcommand{\bd}{\begin{description}}
\newcommand{\ed}{\end{description}}
\newcommand{\set}[1]{\{#1\}}
\newcommand{\0}{\overline{0}}
\newcommand{\1}{\overline{1}}
\newcommand{\e}{\epsilon}
\newcommand{\+}{\oplus}
\renewcommand{\.}{\otimes}
\newcommand{\K}{(\Kset, \+, \., \0, \1)}
\newcommand{\T}{(\Rset_+ \cup \set{\infty}, \min, +, \infty, 0)}
\newcommand{\R}{(\Rset_+ \cup \set{\infty}, +, \times, 0, 1)}
\newcommand{\level}{\textrm{level}}
\newcommand{\floor}[1]{\lfloor #1 \rfloor}
\newcommand{\Trans}{T = (\Sigma, \Delta, Q, I, F, E, \lambda, \rho)}
\newcommand{\ipsfig}[2]{\scalebox{#1}{\psfig{#2}}}
\newcommand{\ignore}[1]{}
\title{Linear-Space Computation of the Edit-Distance between a String 
  and a Finite Automaton}
\author{Cyril Allauzen \inst{1} \and Mehryar Mohri \inst{2,1}}
\institute{
        Google Research\\
        76 Ninth Avenue, New York, NY 10011, US.\\
        \and
	Courant Institute of Mathematical Sciences\\
	251 Mercer Street, New York, NY 10012, US.\\
}
\begin{document}
\maketitle

\begin{abstract}
  The problem of computing the edit-distance between a string and a
  finite automaton arises in a variety of applications in
  computational biology, text processing, and speech recognition. This
  paper presents linear-space algorithms for computing the
  edit-distance between a string and an arbitrary weighted automaton
  over the tropical semiring, or an unambiguous weighted automaton
  over an arbitrary semiring. It also gives an efficient linear-space
  algorithm for finding an optimal alignment of a string and such a
  weighted automaton.
\end{abstract}

\section{Introduction}

The problem of computing the edit-distance between a string and a
finite automaton arises in a variety of applications in computational
biology, text processing, and speech recognition
\cite{durbin,gusfield,navarro-raffinot,pevzner,edit}. This may be to
compute the edit-distance between a protein sequence and a family of
protein sequences compactly represented by a finite automaton
\cite{durbin,gusfield,pevzner}, or to compute the error rate of a word
lattice output by a speech recognition with respect to a reference
transcription \cite{edit}. A word lattice is a weighted automaton,
thus this further motivates the need for computing the edit-distance
between a string and a weighted automaton. In all these cases, an
optimal alignment is also typically sought. In computational biology,
this may be to infer the function and various properties of the
original protein sequence from the one it is best aligned with. In
speech recognition, this determines the best transcription hypothesis
contained in the lattice.

This paper presents linear-space algorithms for computing the
edit-distance between a string and an arbitrary weighted automaton over
the tropical semiring, or an unambiguous weighted automaton over an
arbitrary semiring. It also gives an efficient linear-space algorithm
for finding an optimal alignment of a string and such a weighted
automaton. Our linear-space algorithms are obtained by using the same
generic shortest-distance algorithm but by carefully defining
different queue disciplines. More precisely, our meta-queue
disciplines are derived in the same way from an underling queue
discipline defined over states with the same level.

The connection between the edit-distance and the shortest distance in
a directed graph was made very early on (see
\cite{gusfield,crochemore-hancart-lecroq,crochemore-rytter,crochemore-rytter02}
for a survey of string algorithms). This paper revisits some of these
algorithms and shows that they are all special instances of the same
generic shortest-distance algorithm using different queue
disciplines. We also show that the linear-space algorithms all
correspond to using the same meta-queue discipline using different
underlying queues. Our approach thus provides a better understanding
of these classical algorithms and makes it possible to easily
generalize them, in particular to weighted automata.

The first algorithm to compute the edit-distance between a string $x$
and a finite automaton $A$ as well as their alignment was due to
Wagner~\cite{wagner} (see also \cite{wagner-seiferas}). Its time
complexity was in $O(|x| |A|_Q^2)$ and its space complexity in
$O(|A|_Q^2 |\Sigma| + |x| |A|_Q)$, where $\Sigma$ denotes the alphabet
and $|A|_Q$ the number of states of $A$. Sankoff and
Kruskal~\cite{sankoff} pointed out that the time and space complexity
$O(|x| |A|)$ can be achieved when the automaton $A$ is acyclic. Myers
and Miller~\cite{myers-miller89} significantly improved on previous
results. They showed that when $A$ is acyclic or when it is a
\emph{Thompson automaton}, that is an automaton obtained from a
regular expression using Thompson's construction \cite{thompson}, the
edit-distance between $x$ and $A$ can be computed in $O(|x| |A|)$ time
and $O(|x| + |A|)$ space. They also showed, using a technique due to
Hirschberg \cite{hirschberg}, that the optimal alignment between $x$
and $A$ can be obtained in $O(|x| + |A|)$ space, and in $O(|x| |A|)$
time if $A$ is acyclic, and in $O(|x| |A| \log |x|)$ time when $A$ is
a Thompson automaton.

The remainder of the paper is organized as
follows. Section~\ref{sec:preliminaries} introduces the definition of
semirings, and weighted automata and transducers. In
Section~\ref{sec:edit-distance}, we give a formal definition of the
edit-distance between a string and a finite automaton, or a weighted
automaton. Section~\ref{sec:algo} presents our linear-space
algorithms, including the proof of their space and time complexity and
a discussion of an improvement of the time complexity for automata
with some favorable graph structure property.

\section{Preliminaries}
\label{sec:preliminaries}

This section gives the standard definition and specifies the notation
used for weighted transducers and automata which we use in our
computation of the edit-distance.

\emph{Finite-state transducers} are finite automata \cite{perrin} in
which each transition is augmented with an output label in addition to
the familiar input label \cite{berstel,eilenberg}. Output labels are
concatenated along a path to form an output sequence and similarly
input labels define an input sequence. \emph{Weighted transducers} are
finite-state transducers in which each transition carries some weight
in addition to the input and output labels
\cite{soittola,kuich}. Similarly, \emph{weighted automata} are finite
automata in which each transition carries some weight in addition to
the input label. A path from an initial state to a final state is
called an \emph{accepting path}. A weighted transducer or weighted
automaton is said to be \emph{unambiguous} if it admits no two
accepting paths with the same input sequence.

The weights are elements of a semiring $\K$, that is a ring that may
lack negation \cite{kuich}. Some familiar semirings are the tropical
semiring $\T$ and the probability semiring $\R$, where $\Rset_+$
denotes the set of non-negative real numbers. In the following, we
will only consider weighted automata and transducers over the tropical
semiring. However, all the results of section~\ref{sec:algo:edit} hold
for an unambiguous weighted automaton $A$ over an arbitrary semiring.

The following gives a formal definition of weighted transducers.

\begin{definition}
  A {\em weighted finite-state transducer} $T$ over the tropical
  semiring $\T$ is an 8-tuple $\Trans$ where $\Sigma$ is the finite
  input alphabet of the transducer, $\Delta$ its finite output
  alphabet, $Q$ is a finite set of states, $I \subseteq Q$ the set of
  initial states, $F \subseteq Q$ the set of final states, $E
  \subseteq Q \times (\Sigma \cup \set{\epsilon}) \times (\Delta \cup
  \set{\epsilon}) \times (\Tset) \times Q$ a finite set of
  transitions, $\lambda: I \rightarrow \Tset$ the initial weight
  function, and $\rho: F \rightarrow \Tset$ the final weight function
  mapping $F$ to $\Tset$.

\end{definition}
We define the {\em size} of $T$ as $|T| = |T|_Q + |T|_E$ where $|T|_Q
= |Q|$ is the number of states and $|T|_E = |E|$ the number of
transitions of $T$.

The weight of a path $\pi$ in $T$ is obtained by summing the weights
of its constituent transitions and is denoted by $w[\pi]$. The weight
of a pair of input and output strings $(x, y)$ is obtained by taking
the minimum of the weights of the paths labeled with $(x, y)$ from an
initial state to a final state.

For a path $\pi$, we denote by $p[\pi]$ its origin state and by
$n[\pi]$ its destination state. We also denote by $P(I, x, y, F)$ the
set of paths from the initial states $I$ to the final states $F$
labeled with input string $x$ and output string $y$.  The weight $T(x,
y)$ associated by $T$ to a pair of strings $(x, y)$ is defined by:
\begin{equation}
T(x, y) = \min_{\pi \in P(I, x, y, F)} \lambda(p[\pi])
+ w[\pi] + \rho(n[\pi]).
\end{equation}
Figure~\ref{fig:examples}(a) shows an example of weighted transducer
over the tropical semiring.

\begin{figure*}[t]
\begin{center}
\begin{tabular}{c@{\hspace{2cm}}c}
\ipsfig{.4}{figure=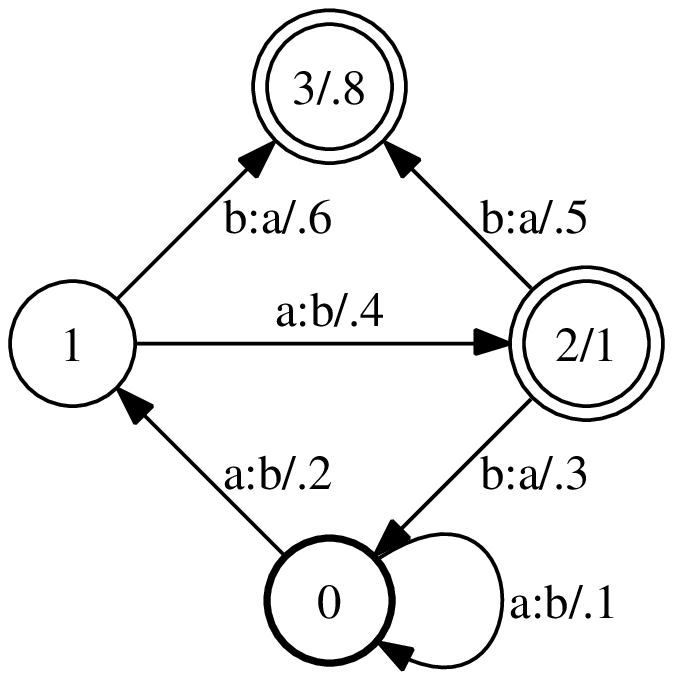} & \ipsfig{.4}{figure=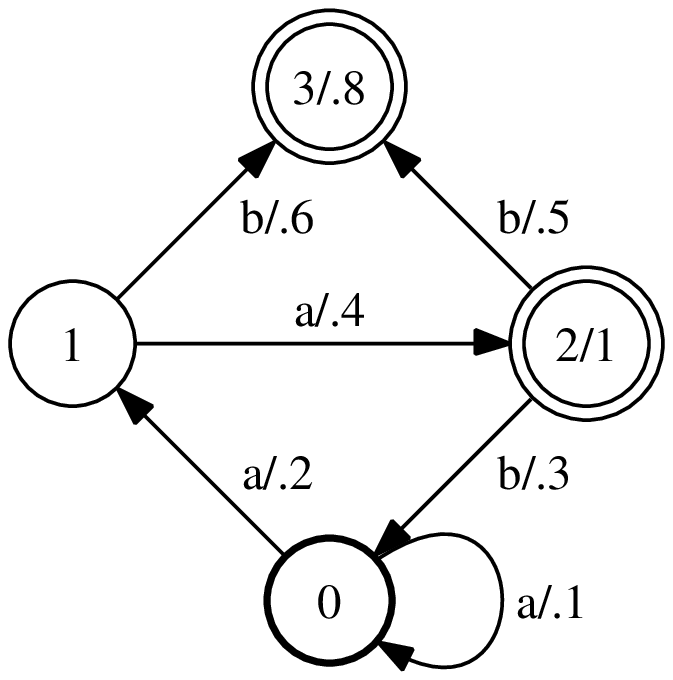}\\
(a) & (b)
\end{tabular}
\end{center}
\caption[]{(a) Example of a weighted transducer $T$. (b) Example of a
weighted automaton $A$. $T(aab, bba) = A(aab) = \min(.1 +
.2 + .6 + .8, .2 + .4 + .5 + .8)$. A bold
circle indicates an initial state and a double-circle a final state.
The final weight $\rho(q)$ of a final state $q$ is indicated after the
slash symbol representing $q$. }
\label{fig:examples}
\end{figure*}
\emph{Weighted automata} can be defined as
weighted transducers $A$ with identical input and output labels, for
any transition. Thus, only pairs of the form $(x, x)$ can have a
non-zero weight by $A$, which is why the weight associated by $A$ to
$(x, x)$ is abusively denoted by $A(x)$ and identified with the
\emph{weight associated by $A$ to $x$}.  Similarly, in the graph
representation of weighted automata, the output (or input) label is
omitted. Figure~\ref{fig:examples}(b) shows an example.

\section{Edit-distance}
\label{sec:edit-distance}

We first give the definition of the edit-distance between a string and
a finite automaton.

Let $\Sigma$ be a finite alphabet, and let $\Omega$ be defined by
$\Omega = (\Sigma \cup \set{\epsilon}) \times (\Sigma \cup
\set{\epsilon}) - \set{(\epsilon, \epsilon)}$. An element of $\Omega$
can be seen as a symbol edit operation: $(a,\epsilon)$ is a deletion,
$(\epsilon, a)$ an insertion, and $(a, b)$ with $a \neq b$ a
substitution.  We will denote by $h$ the natural morphism between
$\Omega^*$ and $\Sigma^* \times \Sigma^*$ defined by $h((a_1, b_1)
\cdots (a_n, b_n)) = (a_1 \cdots a_n, b_1 \cdots b_n)$. An {\em
  alignment} $\omega$ between two strings $x$ and $y$ is an element of
$\Omega^*$ such that $h(\omega) = (x,y)$.

Let $c : \Omega \rightarrow \Rset_+$ be a function associating a
non-negative cost to each edit operation. The cost of an alignment
$\omega = \omega_1 \cdots \omega_n$ is defined as $c(\omega) =
\sum_{i=1}^n c(\omega_i)$.

\begin{definition}
  The {\em edit-distance} $d(x, y)$ of two strings $x$ and $y$ is the
  minimal cost of a sequence of symbols insertions, deletions or
  substitutions transforming one string into the other:
\begin{equation}
d(x, y) = \min_{h(\omega) = (x, y)} c(\omega).
\end{equation}
  When $c$ is the function defined by $c(a, a) = 0$ and $c(a,
  \epsilon) = c(\epsilon, a) = c(a,b) = 1$ for all $a$, $b$ in
  $\Sigma$ such that $a \not= b$, the edit-distance is also known as
  the Levenshtein distance. The {\em edit-distance $d(x, A)$ between a
  string $x$ and a finite automaton $A$} can then be defined as
\begin{equation}
d(x, A) = \min_{y \in L(A)} d(x, y),
\end{equation}
  where $L(A)$ denotes the regular language accepted by $A$. The {\em
  edit-distance $d(x,A)$ between a string $x$ and a weighted automaton
  $A$ over the tropical semiring} is defined as:
\begin{equation}
d(x,A) = \min_{y \in \Sigma^*} \bigl(A(y) + d(x, y)\bigr).
\end{equation}
\end{definition}

\section{Algorithms}
\label{sec:algo}

In this section, we present linear-space algorithms both for computing
the edit-distance $d(x, A)$ between an arbitrary string $x$ and an
automaton $A$, and an optimal alignment between $x$ and $A$, that is
an alignment $\omega$ such that $c(\omega) = d(x, A)$.

We first briefly describe two general algorithms that we will use as
subroutines.

\subsection{General algorithms}

\subsubsection{Composition.}
\label{sec:composition}

The {\em composition} of two weighted transducers $T_1$ and $T_2$ over
the tropical semiring with matching input and output alphabets
$\Sigma$, is a weighted transducer denoted by $T_1 \circ
T_2$ defined by:
\begin{equation}
(T_1 \circ T_2)(x, y) = \min_{z \in \Sigma^*}\ T_1(x,z) + T_2(z, y).
\end{equation}
$T_1 \circ T_2$ can be computed from $T_1$ and $T_2$ using the
composition algorithm for weighted transducers
\cite{pereira-riley,ecai}. States in the composition $T_1 \circ T_2$
are identified with pairs of a state of $T_1$ and a state of $T_2$. In
the absence of transitions with $\e$ inputs or outputs, the
transitions of $T_1 \circ T_2$ are obtained as a result of the
following matching operation applied to the transitions of $T_1$ and
$T_2$:
\begin{equation}
(q_1, a, b, w_1, q_1') \mbox{ and } (q_2, b, c, w_2, q'_2)
\to ((q_1,q_2), a, c, w_1 + w_2, (q'_1,q'_2)).
\end{equation}
A state $(q_1,q_2)$ of $T_1 \circ T_2$ is initial (resp. final) iff
$q_1$ and $q_2$ are initial (resp. final) and, when it is final, its
initial (resp.final) weight is the sum of the initial (resp. final)
weights of $q_1$ and $q_2$. In the worst case, all transitions of
$T_1$ leaving a state $q_1$ match all those of $T_2$ leaving state
$q_2$, thus the space and time complexity of composition is quadratic,
that is $O(|T_1||T_2|)$.

\subsubsection{Shortest distance.}
\label{sec:sd}

Let $A$ be a weighted automaton over the tropical semiring. The {\em
shortest distance} from $p$ to $q$ is defined as
\begin{equation}
d[p, q] = \min_{\pi \in P(p, q)} w[\pi].
\end{equation}
It can be computed using the generic single-source shortest-distance
algorithm of \cite{shortest-distance}, a generalization of the
classical shortest-distance algorithms. This generic shortest-distance
algorithm works with an arbitrary \emph{queue discipline}, that is the
order according to which elements are extracted from a queue.  We
shall make use of this key property in our algorithms. The pseudocode
of a simplified version of the generic algorithm for the tropical
semiring is given in Figure~\ref{fig:alg:shortest}.

\begin{figure}[ht]
{\small 
$\mbox{\sc Shortest-Distance}(A,s)$
\bc
\li \FOR \EACH $p \in Q$ \DO\\
\li \> $d[p] \EQ \infty$\\
\li $d[s] \EQ 0$\\
\li $S \EQ \set{s}$\\
\li \WHILE $S \not= \emptyset$ \DO\\
\li \> $q \EQ \HEAD(S)$ \\
\li \> $\DEQUEUE(S)$ \\
\li \> \FOR \EACH $e \in E[q]$ \DO \\
\li \> \> \IF $(d[s] + w[e] < d[n[e]])$ \THEN\\
\li \> \> \> $d[n[e]] \EQ d[s] + w[e]$\\
\li \> \> \> \IF $(n[e] \not\in S)$   \THEN\\
\li \> \> \> \> $\ENQUEUE(S,n[e])$\\
\ec
}
\caption{Pseudocode of the generic shortest-distance algorithm.}
\label{fig:alg:shortest}
\end{figure}

The complexity of the algorithm depends on the queue discipline
selected for $S$. Its general expression is
\begin{equation}
\label{eq:gen_exp}
  O(|Q|+ \mathsf{C(A)} \max_{q \in Q} \mathsf{N(q)} |E| + (\mathsf{C(I)} + 
  \mathsf{C(X)}) \sum_{q\in Q} \mathsf{N(q)}),
\end{equation}
where $\mathsf{N(q)}$ denotes the number of times state $q$ is
extracted from queue $S$, $\mathsf{C(X)}$ the cost of extracting a
state from $S$, $\mathsf{C(I)}$ the cost of inserting a state in $S$,
and $\mathsf{C(A)}$ the cost of an assignment.

With a shortest-first queue discipline implemented using a heap, the
algorithm coincides with Dijkstra's algorithm \cite{dijkstra} and its
complexity is $O((|E|+|Q|)\log |Q|)$. For an acyclic automaton and
with the topological order queue discipline, the algorithm coincides
with the standard linear-time ($O(|Q|+|E|)$) shortest-distance
algorithm \cite{rivest}.

\subsection{Edit-distance algorithms}
\label{sec:algo:edit}

The edit cost function $c$ can be naturally represented by a one-state
weighted transducer over the tropical semiring $T_c = (\Sigma, \Sigma,
\set{0}, \set{0}, \set{0}, E_c, \1, \1)$, or $T$ in the absence of
ambiguity, with each transition corresponding to an edit operation:
$E_c = \set{ (0, a, b, c(a, b), 0) | (a, b) \in \Omega}$.

\begin{lemma}
\label{lem:edit}
  Let $A$ be a weighted automaton over the tropical semiring and let
  $X$ be the finite automaton representing a string $x$. Then, the
  edit-distance between $x$ and $A$ is the shortest-distance from the
  initial state to a final state in the weighted transducer $U = X
  \circ T \circ A$.
\end{lemma}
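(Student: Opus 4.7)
My plan is to reduce the identity to a chain of equalities obtained by unwinding the relevant definitions. The shortest distance from the initial state to a final state of $U$ over the tropical semiring is the minimum weight of an accepting path, which is in turn $\min_{u,v} U(u,v)$, since by construction $U(u,v)$ is the minimum weight over accepting paths labeled with input $u$ and output $v$ (initial and final weights included). So it suffices to compute $U(u,v)$ from the composition formula and to observe that its minimum over $(u,v)$ recovers $d(x,A)$.

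The first substantive step is a subclaim: the single-state transducer $T = T_c$ realizes the edit-distance, i.e., $T(u,v) = d(u,v)$ for every pair of strings $(u,v)$. This follows from a bijection between the accepting paths of $T$ labeled with $(u,v)$ and the alignments $\omega \in \Omega^*$ with $h(\omega) = (u,v)$: each transition $(0,a,b,c(a,b),0)$ corresponds to exactly one edit operation $(a,b) \in \Omega$, and a path's weight equals $\sum_i c(\omega_i) = c(\omega)$. Minimizing over such paths yields $d(u,v)$.

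The second step treats $X$ as a transducer mapping $x$ to $x$ with weight $0$ (and $\infty$ elsewhere), and $A$ as a weighted transducer with equal input and output labels, so that $A(z,v)$ equals the automaton weight $A(v)$ when $z=v$ and $\infty$ otherwise. Two applications of the composition identity from Section~\ref{sec:composition} then give
\begin{equation*}
U(u,v) \;=\; \min_{z_1,z_2}\; X(u,z_1) + T(z_1,z_2) + A(z_2,v),
\end{equation*}
which collapses to $T(x,v) + A(v)$ when $u = x$ and to $\infty$ otherwise. Substituting the subclaim gives $U(x,v) = d(x,v) + A(v)$, so
\begin{equation*}
\min_{u,v} U(u,v) \;=\; \min_{v \in \Sigma^*}\bigl(d(x,v) + A(v)\bigr) \;=\; d(x,A).
\end{equation*}

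The main obstacle is purely bookkeeping: the $\epsilon$-input and $\epsilon$-output transitions of $T$ must be handled correctly during composition, and the two roles of $A$ (as a weighted automaton over $\Sigma$ and as a diagonal transducer over $\Sigma \times \Sigma$) must be kept distinct. Both concerns are packaged into the semantic identity for composition recalled in Section~\ref{sec:composition}, so no additional combinatorial argument is needed beyond unfolding that identity and comparing to the definition of $d(x,A)$.
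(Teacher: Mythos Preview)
Your proof is correct and follows essentially the same route as the paper: you first establish $T(u,v)=d(u,v)$ via the correspondence between paths of $T$ and alignments, then use the semantic composition identity together with the fact that $X$ accepts only $x$ and $A$ is diagonal to obtain $U(x,v)=d(x,v)+A(v)$, and finally minimize over $v$. The only cosmetic difference is that you phrase the shortest distance as $\min_{u,v}U(u,v)$ and explicitly discard the $u\neq x$ case, whereas the paper writes it directly as a minimum over paths with input label $x$.
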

\begin{proof}
  Each transition $e$ in $T$ corresponds to an edit operation $(i[e],
  o[e]) \in \Omega$, and each path $\pi$ corresponds to an alignment
  $\omega$ between $i[\pi]$ and $o[\pi]$. The cost of that alignment
  is, by definition of $T$, $c(\omega) = w[\pi]$. Thus, $T$ defines
  the function:
\begin{equation}
  T(u,v) = \min_{\omega \in \Omega^*} \set{ c(\omega)\colon h(\omega) = (u,v)}
  = d(u,v),
\end{equation}
  for any strings $u$, $v$ in $\Sigma^*$. Since $A$ is an automaton and
  $x$ is the only string accepted by $X$, it follows from the definition
  of composition that $U(x,y) = T(x,y) + A(y) = d(x,y) + A(y)$.
  The shortest-distance from the initial state to a final state in $U$ is
  then:
\begin{align}
\min_{\pi \in P_U(I, F)} w[\pi] &=  \min_{y \in \Sigma^*} 
  \min_{\pi \in P_U(I, x, y, F)}  w[\pi] 
 = \min_{y \in \Sigma^*} U(x,y) \\ 
&= \min_{y \in \Sigma^*} \bigl( d(x,y) + A(y) \bigr) = d(x, A),
\end{align}
  that is the edit-distance between $x$ and $A$.\qed
\end{proof}

\begin{figure*}[t]
\begin{center}
\begin{tabular}{c@{\hspace{2cm}}c@{\hspace{2cm}}c}
\ipsfig{.45}{figure=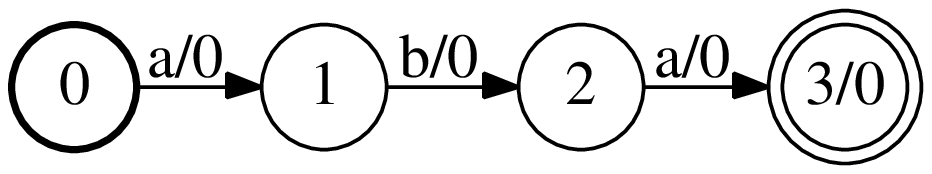} & \ipsfig{.45}{figure=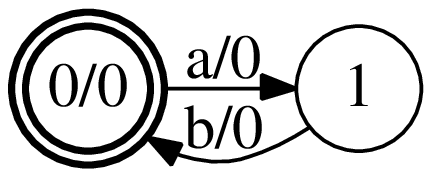}\\
(a) & (b)   \\
\ipsfig{.45}{figure=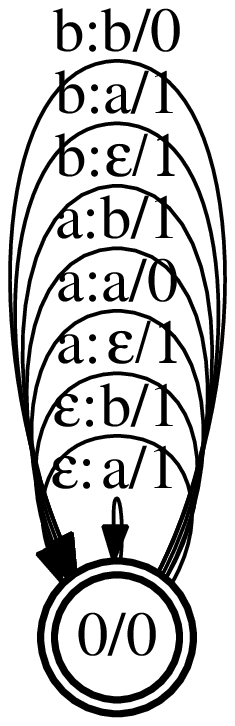} & \hspace{-5cm}\ipsfig{.5}{figure=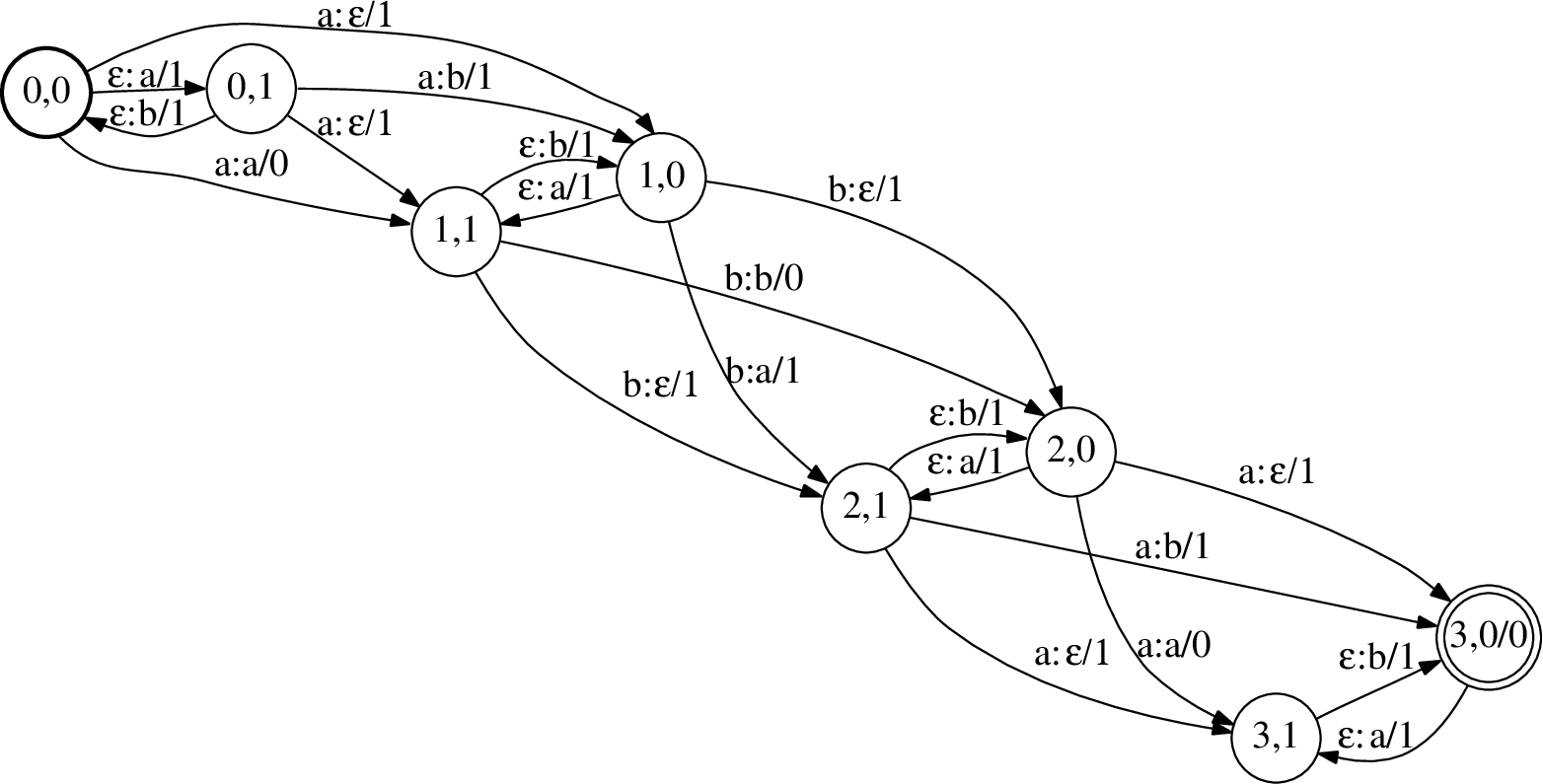}\\
(c) & (d)
\end{tabular}
\end{center}
\caption[]{ (a) Finite automaton $X$ representing the string $x =
aba$. (b) Finite automaton $A$. (c) Edit transducer $T$ over the
alphabet $\set{a,b}$ where the cost of any insertion, deletion and
substitution is 1.  (d) Weighted transducer $U = X \circ T \circ A$.}
\label{fig:edit}
\end{figure*}

Figure~\ref{fig:edit} shows an example illustrating Lemma~\ref{lem:edit}.
Using the lateral strategy of the 3-way composition algorithm of
\cite{3way} or an \emph{ad hoc} algorithm exploiting the structure of
$T$, $U = X \circ T \circ A$ can be computed in $O(|x| |A|)$ time. The
shortest-distance algorithm presented in Section~\ref{sec:sd} can then
be used to compute the shortest distance from an initial state of $U$
to a final state and thus the edit distance of $x$ and $A$. Let us
point out that different queue disciplines in the computation of that
shortest distance lead to different algorithms and complexities.  In
the next section, we shall give a queue discipline enabling us to
achieve a linear-space complexity.

\subsection{Edit-distance computation in linear space}

Using the shortest-distance algorithm described in
Section~\ref{sec:sd} leads to an algorithm with space complexity
linear in the size of $U$, i.e. in $O(|x| |A|)$. However, taking
advantage of the topology of $U$, it is possible to design a queue
discipline that leads to a linear space complexity $O(|x| + |A|)$.

We assume that the finite automaton $X$ representing the string $x$ is
topologically sorted. A state $q$ in the composition $U = X \circ T
\circ A$ can be identified with a triplet $(i, 0, j)$ where $i$ is a
state of $X$, 0 the unique state of $T$, and $j$ a state of $A$. Since
$T$ has a unique state, we further simplify the notation by
identifying each state $q$ with a pair $(i, j)$. For a state $q = (i,
j)$ of $U$, we will refer to $i$ by the \emph{level of $q$}. A key
property of the levels is that there is a transition in $U$ from $q$
to $q'$ iff $\level(q') = \level(q)$ or $\level(q') = \level(q) + 1$.
Indeed, a transition from $(i,j)$ to $(i',j')$ in $U$ corresponds to
taking a transition in $X$ (in that case $i' = i + 1$ since $X$ is
topologically sorted) or staying at the same state in $X$ and
taking an input-$\epsilon$ transition in $T$ (in that case $i' = i$).

From any queue discipline $\prec$ on the states of $U$, we can derive
a new queue discipline $\prec_l$ over $U$ defined for all $q, q'$ in
$U$ as follows:
\begin{equation}
\label{eq:queue}
q \prec_l q' \text{ iff $\bigl(\level(q) < \level(q')\bigr)$ or
$\bigl(\level(q) = \level(q')$ and $q \prec q'\bigr)$}.
\end{equation}

\begin{proposition}
\label{prop:algo}
Let $\prec$ be a queue discipline that requires at most $O(|V|)$ space
to maintain a queue over any set of states $V$. Then, the
edit-distance between $x$ and $A$ can be computed in linear space,
$O(|x|+|A|)$, using the queue discipline $\prec_l$.
\end{proposition}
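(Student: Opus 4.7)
The plan is to run the generic shortest-distance algorithm of Section~\ref{sec:sd} on $U = X \circ T \circ A$, where $U$ is computed lazily and the array $d[\cdot]$ together with the queue $S$ are pruned as soon as their entries become provably inactive. By Lemma~\ref{lem:edit}, the shortest distance from the initial state of $U$ to a final state is $d(x,A)$, so the only remaining task is to argue that the discipline $\prec_l$ keeps the ``active'' portion of $U$ small at every instant.

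The core observation I would establish is a \emph{window invariant}: at every moment during the execution with discipline $\prec_l$, every state in $S$ has level lying in some window $\{\ell,\ell+1\}$ of two consecutive integers. This can be proved by induction on the number of queue operations. Initially $S$ contains only the initial state of $U$, at level $0$. For the inductive step, when the head $q$ is dequeued, either $\level(q)=\ell$ (and the window is unchanged) or $\level(q)=\ell+1$, in which case $\prec_l$ forces $S$ to contain no state at level $\ell$, so the window safely shifts to $\{\ell+1,\ell+2\}$. Any state enqueued or relaxed while processing $q$ then has level $\level(q)$ or $\level(q)+1$, by the transition-structure remark stated just before \eq{eq:queue}, hence lies in the (possibly shifted) window.

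Two space consequences follow from this invariant. First, each level contains at most $|A|_Q$ states (one per state of $A$), so $|S|\le 2|A|_Q$ throughout; by hypothesis the discipline $\prec$ can be maintained in $O(|A|_Q)$ space on any such set. Second, once the minimum level in $S$ strictly exceeds $\ell$, the invariant guarantees that no future dequeue occurs at a level $\le\ell$, so no $d$-value at a level $\le\ell$ can ever be read or updated again; the table of $d$-values can therefore be pruned to the two active levels, using $O(|A|_Q)$ space. The transducer $U$ itself is never materialized: the outgoing transitions of a state $(i,j)$ in $U$ are generated on the fly from those of $i$ in $X$ and of $j$ in $A$ by the composition matching rule of Section~\ref{sec:composition}, so only $X$ ($O(|x|)$ space) and $A$ ($O(|A|)$ space) need to reside in memory. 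Summing these contributions yields the claimed $O(|x|+|A|)$ bound, while correctness is inherited directly from the generic shortest-distance algorithm, which operates correctly under any queue discipline.

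The main delicate point is the safe-deletion argument for the $d$-values: one must rule out that a state at a vacated level $\ell$ is ever re-enqueued. This is precisely what the window invariant buys, since any such re-enqueuing would require relaxing an outgoing transition of a dequeued state at level $\ell-1$ or $\ell$, and $\prec_l$ together with the invariant forbids either dequeue once the window has moved past $\ell$. Once this point is pinned down, the rest of the proof is routine bookkeeping with \eq{eq:gen_exp}.
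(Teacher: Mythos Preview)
Your proposal is correct and follows essentially the same approach as the paper: both exploit that the queue discipline $\prec_l$ confines the active computation to two consecutive levels, so that only $O(|A|_Q)$ shortest-distance values and $O(|A|)$ worth of $U$ need be kept at any time. Your version is more explicit than the paper's---you name and prove the two-level window invariant by induction and spell out the safe-deletion argument, whereas the paper simply asserts that levels below $i-1$ can be discarded---but the underlying idea and the resulting $O(|x|+|A|)$ bound are the same.
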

\begin{proof}
  The benefit of the queue discipline $\prec_l$ is that when computing
  the shortest distance to $q = (i,j)$ in $U$, only the shortest
  distances to the states in $U$ of level $i$ and $i-1$ need to be
  stored in memory. The shortest distances to the states of level
  strictly less than $i - 1$ can be safely discarded. Thus, the space
  required to store the shortest distances is in $O(|A|_Q)$.
 
  Similarly, there is no need to store in memory the full transducer
  $U$. Instead, we can keep in memory the last two levels active in
  the shortest-distance algorithm. This is possible because the
  computation of the outgoing transitions of a state with level $i$
  only requires knowledge about the states with level $i$ and $i +
  1$. Therefore, the space used to store the active part of $U$ is in
  $O(|A|_E + |A|_Q) = O(|A|)$.  Thus, it follows that the space
  required to compute the edit-distance of $x$ and $A$ is linear, that
  is in $O(|x| + |A|)$.\qed
\end{proof}

The time complexity of the algorithm depends on the underlying queue
discipline $\prec$.  A natural choice is for $\prec$ is the
shortest-first queue discipline, that is the queue discipline used in
Dijkstra's algorithm. This yields the following corollary.
\begin{corollary}
\label{cor:cyclic}
The edit-distance between a string $x$ and an automaton $A$ can be
computed in time $O(|x| |A| \log |A|_Q)$ and space $O(|x| + |A|)$
using the queue discipline $\prec_l$.
\end{corollary}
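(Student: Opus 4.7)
The plan is to deduce the space bound directly from Proposition~\ref{prop:algo} and then specialize the general complexity expression in Eq.~(\ref{eq:gen_exp}) to the queue discipline $\prec_l$ built on top of the shortest-first discipline.

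For the space bound, I would observe that a binary heap maintaining the shortest-first order over any set $V$ uses $O(|V|)$ space, so the hypothesis of Proposition~\ref{prop:algo} is satisfied; the proposition then immediately yields the $O(|x|+|A|)$ space complexity.

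For the time bound, I would argue in three steps. First, since the edit costs are non-negative and within a single level the ordering $\prec_l$ coincides with shortest-first, the usual Dijkstra invariant holds level by level, so each state of $U$ is extracted from the queue at most once, i.e.\ $\mathsf{N}(q)\le 1$ for every state $q$. Second, I would establish the key invariant that the heap contains states from at most two consecutive levels at any point during the execution: by definition of $\prec_l$ in Eq.~(\ref{eq:queue}) all level-$i$ states are extracted before any level-$(i+1)$ state, and the transitions of $U$ move either within level $i$ or from level $i$ to level $i+1$, so while the current extraction level is $i$, only states of level $i$ or $i+1$ can be present in the queue. This bounds the heap size by $2|A|_Q$, making $\mathsf{C(A)}=\mathsf{C(I)}=\mathsf{C(X)}=O(\log |A|_Q)$. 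Third, using $|U|_Q=O(|x|\,|A|_Q)$ and $|U|_E=O(|x|\,|A|)$ and substituting into Eq.~(\ref{eq:gen_exp}) gives
\begin{equation*}
O\bigl(|x|\,|A|_Q + |x|\,|A|\log|A|_Q + |x|\,|A|_Q \log |A|_Q\bigr) = O\bigl(|x|\,|A|\log|A|_Q\bigr),
\end{equation*}
which is the claimed time bound.

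The main obstacle is the two-level invariant used to keep the log factor equal to $\log |A|_Q$ rather than $\log(|x|\,|A|_Q)$; everything else is a routine substitution once that invariant is in place. Fortunately the invariant is a direct consequence of the level-monotone structure of $U$ noted in the discussion preceding Eq.~(\ref{eq:queue}) together with the priority given to smaller levels in $\prec_l$, so the argument is short.
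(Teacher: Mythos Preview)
Your proof is correct and follows essentially the same approach as the paper: invoke Proposition~\ref{prop:algo} for space, argue $\mathsf{N}(q)=1$ from the shortest-first discipline within each level, bound the queue size by $O(|A|_Q)$, and substitute into Eq.~(\ref{eq:gen_exp}). The only cosmetic difference is that the paper maintains a separate shortest-first heap per level (each of size at most $|A|_Q$), whereas you keep a single global heap and use the two-level invariant to bound its size by $2|A|_Q$; both yield $O(\log|A|_Q)$ per queue operation and the same overall bound.
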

\begin{proof}
  A shortest-first queue is maintained for each level and contains at
  most $|A|_Q$ states. The cost for the global queue of an insertion,
  $\mathsf{C(I)}$, or an assignment, $\mathsf{C(A)}$, is in $O(\log
  |A|_Q)$ since it corresponds to inserting in or updating one of the
  underlying level queues. Since $\mathsf{N(q)} = 1$, the general
  expression of the complexity (\ref{eq:gen_exp}) leads to an overall
  time complexity of $O(|x| |A| \log |A|_Q)$ for the shortest-distance
  algorithm.\qed
\end{proof}

When the automaton $A$ is acyclic, the time complexity can be further
improved by using for $\prec$ the topological order queue discipline.
\begin{corollary}
\label{cor:acyclic}
If the automaton $A$ is acyclic, the edit-distance between $x$ and $A$
can be computed in time $O(|x| |A|)$ and space $O(|x| + |A|)$ using
the queue discipline $\prec_l$ with the topological order queue
discipline for $\prec$.
\end{corollary}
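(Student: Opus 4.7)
The plan is to combine Proposition~\ref{prop:algo}, which already gives the $O(|x|+|A|)$ space bound, with a careful verification that the topological order queue discipline both fits into the hypothesis of that proposition and yields linear-time shortest-distance computation when $A$ is acyclic.

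First I would handle the space bound. The topological order queue, maintained over a set of states $V$, stores each element of $V$ at most once in a list sorted by a precomputed total order, so it uses $O(|V|)$ space. Thus $\prec$ satisfies the hypothesis of Proposition~\ref{prop:algo}, and the queue discipline $\prec_l$ obtained from it via \eq{eq:queue} yields space $O(|x|+|A|)$ for the entire shortest-distance computation on $U = X \circ T \circ A$.

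Next I would address the time bound. The key observation is that when $A$ is acyclic, the composition $U$ is also acyclic, and moreover admits a topological order that refines $\prec_l$. Writing states of $U$ as pairs $(i,j)$ with $i$ a state of $X$ and $j$ a state of $A$, recall from the discussion preceding \eq{eq:queue} that every transition in $U$ either increases the level (corresponding to a transition of $X$) or keeps it fixed (corresponding to an $\epsilon$-input transition of $T$, which is an insertion matching a transition of $A$). Hence the subgraph of $U$ restricted to a single level $i$ is isomorphic to the subgraph of $A$ induced on the states appearing as second coordinates, and is therefore acyclic. I would precompute a topological order $\prec_A$ on the states of $A$ in $O(|A|)$ time, let $\prec$ be the induced order on pairs (comparing second coordinates by $\prec_A$), and verify that $\prec_l$ is then a topological order of $U$: a transition $(i,j) \to (i',j')$ either increases $i$, or keeps $i$ fixed while moving from $j$ to a $\prec_A$-successor $j'$ in $A$.

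With a topological-order queue discipline on an acyclic graph, the generic shortest-distance algorithm extracts each state exactly once, so $\mathsf{N(q)} = 1$ and, amortised over the run, $\mathsf{C(I)}$, $\mathsf{C(X)}$, and $\mathsf{C(A)}$ are $O(1)$. The general expression \eq{eq:gen_exp} then reduces to $O(|Q_U| + |E_U|)$. Since $|Q_U| = O(|x||A|_Q)$ and $|E_U| = O(|x||A|)$, this is $O(|x||A|)$. Combined with the space bound above, this yields the stated complexity. The main subtlety to get right is the interaction between levels and the within-level topological order, specifically checking that $\prec_l$ never forces extracting a state before one of its $U$-predecessors; that is precisely what the two cases of the level analysis above are designed to rule out.
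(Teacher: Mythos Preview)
Your proof is correct and follows essentially the same approach as the paper: both avoid computing a topological order on all of $U$ (which would cost $O(|U|)$ space) by instead computing a topological order on $A$ and then verifying that the resulting $\prec_l$ is a topological order on $U$. Your version is more detailed, explicitly checking the two cases for transitions in $U$ and carrying out the complexity calculation via \eq{eq:gen_exp}, whereas the paper simply asserts that the induced order is topological.
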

\begin{proof}
  Computing the topological order for $U$ would require $O(|U|)$
  space.  Instead, we use the topological order on $A$, which can be
  computed in $O(|A|)$, to define the underlying queue discipline.
  The order inferred by (\ref{eq:queue}) is then a topological order
  on $U$.  \qed
\end{proof}

Myers and Miller \cite{myers-miller89} showed that when $A$ is a
Thompson automaton, the time complexity can be reduced to $O(|x| |A|)$
even when $A$ is not acyclic.  This is possible because of the
following observation: in a weighted automaton over the tropical
semiring, there exists always a shortest path that is \emph{simple},
that is with no cycle, since cycle weights cannot decrease 
path weight.

In general, it is not clear how to take advantage of this
observation. However, a Thompson automaton has additionally the
following structural property: a \emph{loop-connectedness} of one.
The \emph{loop-connectedness} of $A$ is $k$ if in any depth-first
search of $A$, a simple path goes through at most $k$ back
edges. \cite{myers-miller89} showed that this property, combined with
the observation made previously, can be used to improve the time
complexity of the algorithm.  The results of \cite{myers-miller89} can
be generalized as follows.

\begin{corollary}
  If the loop-connectedness of $A$ is $k$, then the edit-distance
  between $x$ and $A$ can be computed in $O(|x| |A| k)$ time and
  $O(|x| + |A|)$ space.
\end{corollary}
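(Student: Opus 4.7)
The plan is to extend the linear-space scheme of Proposition~\ref{prop:algo} by choosing the underlying queue discipline $\prec$ (within a single level) to exploit the loop-connectedness bound. First, I would perform in $O(|A|)$ time a depth-first search of $A$ and let $\sigma$ denote the resulting DFS pre-order on states of $A$; the classification of edges into tree/forward/cross edges (which go forward in $\sigma$) and back edges (which go backward in $\sigma$) is fixed once and for all. I then take $\prec := \sigma$ and use the derived meta-queue discipline $\prec_l$ of Eq.~(\ref{eq:queue}) in the generic shortest-distance algorithm applied to $U = X \circ T \circ A$.

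The space bound $O(|x|+|A|)$ follows immediately from Proposition~\ref{prop:algo}, since the order $\sigma$ is stored in $O(|A|)$ space and each level-queue contains at most $|A|_Q$ states. For the time bound, the key quantity in Eq.~(\ref{eq:gen_exp}) is $\mathsf{N(q)}$, the number of times a state is dequeued. Since $\prec_l$ processes all states of level $i$ before any state of level $i+1$, it suffices to bound, within a fixed level, the number of times each state of $A$ is extracted. I claim this is at most $k+1$, which together with the $O(1)$ cost of queue operations in DFS order yields $O((k+1)|A|)$ work per level and $O(|x| |A| k)$ overall.

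To prove the claim, I would use two ingredients. First, because we work over the tropical semiring and edit costs are non-negative, cycles cannot decrease path weight, so for every state $q$ of $U$ there exists a shortest path to $q$ that is simple. Second, the definition of loop-connectedness guarantees that any simple path, when read in the order of the DFS, traverses at most $k$ back edges of $A$. Within a single level $i$, the incoming distances (coming from the preceding level through $X$-transitions) act as source values, and the in-level relaxations follow $A$'s transitions in the order $\sigma$. By induction on $j$, after the $j$-th pass of the DFS-ordered queue through the level, the tentative distance at every state equals the minimum weight of a simple within-level path from a source that uses at most $j-1$ back edges: tree/forward/cross edges are relaxed in a single sweep because their endpoints come later in $\sigma$, while a back edge $e$ can only trigger a re-enqueue of its target, starting a new sweep that incorporates one additional back edge. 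Hence after $k+1$ passes the within-level distances are final, so $\mathsf{N(q)} \le k+1$.

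The main obstacle is not the counting itself but the justification that it is legitimate to restrict attention to simple within-level paths when bounding $\mathsf{N(q)}$: one has to verify that a state's distance cannot be lowered by a non-simple path that uses more than $k$ back-edge traversals, which is exactly where the tropical-semiring observation enters. Once this is established, the rest is bookkeeping: substituting $\mathsf{N(q)} \le k+1$, $\mathsf{C(A)} = \mathsf{C(I)} = \mathsf{C(X)} = O(1)$ into Eq.~(\ref{eq:gen_exp}) summed over all $|x|+1$ levels gives the claimed $O(|x| |A| k)$ time, while the space bound is inherited verbatim from Proposition~\ref{prop:algo}.
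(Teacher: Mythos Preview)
Your overall plan matches the paper's, but the choice of DFS \emph{pre-order} for $\sigma$ is a genuine error that breaks the time bound. In a directed DFS, tree and forward edges do go forward in pre-order, and back edges go backward, but \emph{cross edges also go backward} in pre-order (the head has both smaller pre-number and smaller post-number than the tail). So your claim that ``tree/forward/cross edges \ldots\ go forward in $\sigma$'' is false for cross edges, and with it the single-sweep argument collapses. Concretely, let $A$ have states $a, b_1,\ldots,b_n, c$ with edges $a\to b_i$ and $b_i\to c$ for every $i$; a DFS from $a$ that explores $b_1$ first yields pre-order $a, b_1, c, b_2,\ldots,b_n$, the edges $b_i\to c$ for $i\ge 2$ are cross edges, and there are \emph{no} back edges, so $k=0$. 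If the incoming level values are arranged so that each successive $b_i$ yields a strictly shorter route to $c$, then with $\prec=\sigma$ the state $c$ is re-enqueued and dequeued $n$ times within one level: $\mathsf{N}(c)=n\not\le k+1=1$, and the claimed $O(|x||A|k)$ bound fails even though $A$ is acyclic.

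The repair, and what the paper actually does, is to take $\sigma$ to be a topological order of $A$ \emph{after deleting the back edges} (equivalently, reverse DFS post-order); then every non-back edge goes forward in $\sigma$, which is the property your argument needs. The paper additionally makes the ``passes'' explicit by ordering states primarily by the number of times they have been enqueued and only secondarily by this topological rank. That two-key discipline gives clean rounds, makes the invariant ``after the $l$th dequeue of $q$, $d[q]$ is the shortest-path weight over paths using at most $l-1$ back edges'' immediate, and is still implementable as an $O(|A|_Q)$-space queue with $O(1)$ insert/extract/update. With a pure-$\sigma$ priority the rounds interleave and the invariant you state (``after the $j$-th pass \ldots'') does not correspond to what the algorithm actually does; the two-key ordering is both easier to analyze and no harder to implement.
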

\begin{proof}
  We first use a depth-first search of $A$, identify back edges, and
  mark them as such. We then compute the topological order for $A$,
  ignoring these back edges.  Our underlying queue discipline $\prec$
  is defined such that a state $q = (i,j)$ is ordered first based on
  the number of times it has been enqueued and secondly based on the
  order of $j$ in the topological order ignoring back edges. This
  underlying queue can be implemented in $O(|A|_Q)$ space with
  constant time costs for the insertion, extraction and updating
  operations. The order $\prec_l$ derived from $\prec$ is then not
  topological for a transition $e$ iff $e$ was obtained by matching a
  back edge in $A$ and $\level(p[e]) = \level(n[e])$. When such a
  transition $e$ is visited, $n[e]$ is reinserted in the queue.

  When state $q$ is dequeued for the $l$th time, the value of $d[q]$
  is the weight of the shortest path from the initial state to $q$
  that goes through at most $l - 1$ back edges.  Thus, the inequality
  $\mathsf{N(q)} \le k + 1$ holds for all $q$ and, since the costs for
  managing the queue, $\mathsf{C(I)}$, $\mathsf{C(A)}$, and
  $\mathsf{C(X)}$, are constant, the time complexity of the algorithm
  is in $O(|x| |A| k)$.  \qed
\end{proof}

\subsection{Optimal alignment computation in linear space}

The algorithm presented in the previous section can also be used to
compute an optimal alignment by storing a back pointer at each state
in $U$. However, this can increase the space complexity up to $O(|x|
|A|_Q)$.  The use of back pointers to compute the best alignment can
be avoided by using a technique due to Hirschberg \cite{hirschberg},
also used by \cite{myers-miller88,myers-miller89}.

As pointed out in previous sections, an optimal alignment between $x$
and $A$ corresponds to a shortest path in $U = X \circ T \circ A$.
We will say that a state $q$ in $U$ is a {\em midpoint} of an optimal
alignment between $x$ and $A$ if $q$ belongs to a shortest path in
$U$ and $\level(q) = \floor{|x|/2}$.

\begin{lemma}
Given a pair $(x, A)$, a midpoint of the optimal alignment between $x$
and $A$ can be computed in $O(|x| + |A|)$ space with a time complexity in
$O(|x| |A|)$ if $A$ is acyclic and in $O(|x| |A| \log |A|_Q)$
otherwise.
\end{lemma}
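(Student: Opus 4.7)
The plan is to adapt Hirschberg's divide-and-conquer technique to our setting. The key observation is that a state $q^*$ at level $\floor{|x|/2}$ is a midpoint of an optimal alignment if and only if it minimizes $d_{\text{fwd}}[q] + d_{\text{bwd}}[q]$ over all states $q$ at level $\floor{|x|/2}$, where $d_{\text{fwd}}[q]$ denotes the shortest distance from an initial state of $U$ to $q$ and $d_{\text{bwd}}[q]$ the shortest distance from $q$ to a final state of $U$. Since every accepting path in $U$ crosses level $\floor{|x|/2}$ (each transition leaves the level unchanged or increments it by one, and the path must go from level $0$ to level $|x|$), at least one such $q^*$ exists and lies on a shortest accepting path.

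First, I would run the linear-space forward shortest-distance algorithm of Proposition~\ref{prop:algo} on $U = X \circ T \circ A$, halting the computation once every state at level $\floor{|x|/2}$ has been extracted from the queue. By the level-priority property of $\prec_l$, at this moment the values $d_{\text{fwd}}[q]$ for all $q$ at the midpoint level are final; before continuing, I would copy these $O(|A|_Q)$ values into a dedicated array.

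Second, I would perform a symmetric backward pass. Let $X^R$ and $A^R$ denote the reverses of $X$ and $A$, obtained in $O(|x|)$ and $O(|A|)$ space respectively by swapping initial and final states and reversing each transition; the edit transducer $T$ is itself invariant under reversal. Running the same algorithm on $X^R \circ T \circ A^R$ and halting after level $|x| - \floor{|x|/2}$ yields the backward distances $d_{\text{bwd}}[q]$ for every state $q$ with $\level(q) = \floor{|x|/2}$ in $U$, again within $O(|x| + |A|)$ working space.

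Finally, I would scan the $O(|A|_Q)$ midpoint-level states and return one minimizing $d_{\text{fwd}}[q] + d_{\text{bwd}}[q]$. The total time is at most twice that of one shortest-distance computation, giving $O(|x|\,|A|)$ when $A$ is acyclic (via Corollary~\ref{cor:acyclic}) and $O(|x|\,|A| \log |A|_Q)$ otherwise (via Corollary~\ref{cor:cyclic}); total space is $O(|x| + |A|)$. The main subtlety is the bookkeeping for the backward pass: one must check that reversing $A$ and $X$ preserves the level structure of the composed transducer, so that the same meta-queue discipline $\prec_l$ (now counting levels downward from $|x|$) still bounds the active working memory by two consecutive levels.
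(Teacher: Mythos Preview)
Your proposal is correct and follows essentially the same approach as the paper: both compute forward and backward shortest distances up to the midpoint level (the backward pass being realized via the reversal $X^R \circ T \circ A^R$), then select the state at level $\floor{|x|/2}$ minimizing the sum. Your write-up is in fact slightly more explicit than the paper's---you spell out why every accepting path must cross the midpoint level and flag the bookkeeping needed for the reversed level structure---but the underlying argument is the same Hirschberg-style computation.
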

\begin{proof}
  Let us consider $U = X \circ T \circ A$. For a state $q$ in $U$ let
  $d[q]$ denote the shortest distance from the initial state to $q$,
  and by $d^R[q]$ the shortest distance from $q$ to a final state.
  For a given state $q = (i,j)$ in $U$, $d[(i,j)] + d^R[(i,j)]$ is the
  cost of the shortest path going through $(i,j)$.  Thus, for any $i$,
  the edit-distance between $x$ and $A$ is $d(x, A) = \min_{j}
  (d[(i,j)] + d^R[(i,j)])$.

  For a fixed $i_0$, we can compute both $d[(i_0, j)]$ and
  $d^R[(i_0,j)]$ for all $j$ in $O(|x| |A| \log |A|_Q)$ time (or
  $O(|x| |A|$ time if $A$ is acyclic) and in linear space $O(|x|+|A|)$
  using the algorithm from the previous section forward and backward
  and stopping at level $i_0$ in each case. Running the algorithm
  backward (exchanging initial and final states and permuting the
  origin and destination of every transition) can be seen as computing
  the edit-distance between $x^R$ and $A^R$, the {\em mirror images}
  of $x$ and $A$.
 
  Let us now set $i_0 = \floor{|x|/2}$ and $j_0 = \argmin_j (d[(i_0,j)]
  + d^R[(i_0,j)])$. It then follows that $(i_0, j_0)$ is a midpoint of
  the optimal alignment. Hence, for a pair $(x, A)$, the running-time
  complexity of determining the midpoint of the alignment is in $O(|x|
  |A|)$ if $A$ is acyclic and $O(|x| |A| \log |A|_Q)$ otherwise. \qed
\end{proof}

The algorithm proceeds recursively by first determining the midpoint
of the optimal alignment.  At step 0 of the recursion, we first find
the midpoint $(i_0,j_0)$ between $x$ and $A$. Let $x^1$ and $x^2$ be
such that $x = x^1 x^2$ and $|x^1| = i_0$, and let $A^1$ and $A^2$ be
the automaton obtained from $A$ by respectively changing the final
state to $j_0$ in $A^1$ and the initial state to $j_0$ in $A^2$. We
can now recursively find the alignment between $x^1$ and $A^1$ and
between $x^2$ and $A^2$.

\begin{theorem}
  An optimal alignment between a string $x$ and an automaton $A$ can
  be computed in linear space $O(|x| + |A|)$ and in time $O(|x||A|)$
  if $A$ is acyclic, $O(|x||A| \log |x| \log |A|_Q)$ otherwise.
\end{theorem}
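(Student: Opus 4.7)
The plan is to realize the Hirschberg-style divide and conquer sketched just before the statement. I would define a recursive procedure $\mathrm{Align}(x, A)$ whose base case $|x| \le 1$ is handled by a single call to the linear-space edit-distance algorithm of Section~\ref{sec:algo:edit}, reconstructing the (at most one) edit operation directly; the recursive case (i) invokes the preceding midpoint lemma with $i_0 = \floor{|x|/2}$ to locate a midpoint $(i_0, j_0)$, (ii) splits $x = x^1 x^2$ with $|x^1| = i_0$ and builds $A^1, A^2$ by retargeting the final (resp.\ initial) states of $A$ to $\{j_0\}$, (iii) recursively solves $(x^1, A^1)$ and $(x^2, A^2)$ \emph{sequentially}, and (iv) returns the concatenation of the two alignments. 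Correctness is immediate from the midpoint lemma.

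For the space bound, I would observe that each activation uses only the $O(|x|+|A|)$ working memory required by the midpoint lemma, that executing the two recursive calls sequentially keeps at most one sibling branch live at a time, and that the call stack only needs to carry the $O(\log|x|)$ midpoints and pointers into $A$. Since the emitted alignment has length at most $|x| + |A|_Q$, the total space remains in $O(|x|+|A|)$.

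For the time bound, let $T(m, n)$ denote the running time on an input with $|x| = m$ and $|A| = n$. The midpoint lemma contributes $O(mn)$ if $A$ is acyclic and $O(mn \log n_Q)$ in general. For the non-acyclic case I would attempt no trimming and use $T(m, n) \le 2 T(m/2, n) + c \cdot mn \log n_Q$. A level-by-level summation then bounds the work at depth $d$ by $2^d \cdot c(m/2^d) n \log n_Q = c m n \log n_Q$, and with $O(\log m)$ levels the total is $O(mn \log m \log n_Q)$, matching the theorem.

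The main obstacle, and the place where the acyclic case requires a separate argument to avoid the extra $\log m$ factor, is refining the recurrence so that the summed automaton sizes stay $O(n)$ rather than $2^d n$. For acyclic $A$ I would first trim $A^1$ to its states co-accessible to $j_0$ and $A^2$ to its states accessible from $j_0$. The key structural observation is that in an acyclic automaton no state $q \ne j_0$ can simultaneously lie on a path from an initial state to $j_0$ and on a path from $j_0$ to a final state, since this would close a cycle through $q$; hence $|A^1|_Q + |A^2|_Q \le |A|_Q + 1$, and since an edge shared by $A^1$ and $A^2$ would have to have both endpoints equal to $j_0$ (a forbidden self-loop in an acyclic graph), also $|A^1|_E + |A^2|_E \le |A|_E$. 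This gives the sharper recurrence $T(m, n) \le T(m/2, n_1) + T(m/2, n_2) + c \cdot mn$ with $n_1 + n_2 \le n + O(1)$, which a routine induction solves as $T(m, n) = O(mn)$.
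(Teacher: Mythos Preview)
Your approach is the same Hirschberg-style recursion the paper uses, and your level-by-level time analysis matches the paper's; in fact your disjointness argument for the trimmed $A^1$, $A^2$ in the acyclic case supplies exactly the justification the paper omits when it simply asserts $\sum_i |A^i_k| = O(|A|)$. One small slip: in the base case $|x|\le 1$ the alignment is not ``at most one edit operation''---even for $|x|=0$ it is a sequence of insertions along a shortest path in $A$---but since $U$ then has at most $2|A|_Q$ states, ordinary back-pointers recover that path in $O(|A|)$ space, so the overall argument is unaffected.
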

\begin{proof}
  We can assume without loss of generality that the length of $x$ is a
  power of 2.  At step $k$ of the recursion, we need to compute the
  midpoints for $2^k$ string-automaton pairs $(x^i_k, A^i_k)_{1 \le i
    \le 2^k}$. Thus, the complexity of step $k$ is in
  $O(\sum_{i=1}^{2^k} |x^i_k| |A^i_k| \log |A^i_k|_Q) = O(
  \frac{|x|}{2^k} \sum_{i=1}^{2^k} |A^i_k| \log |A^i_k|_Q)$ since
  $|x^i_k| = |x|/2^k$ for all $i$.  When $A$ is acyclic, the $\log$
  factor can be avoided and the equality $\sum_{i=1}^{2^k} |A^i_k| =
  O(|A|)$ holds, thus the time complexity of step $k$ is in
  $O(|x||A|/2^k)$. In the general case, each $|A^i_k|$ can be in the
  order of $|A|$, thus the complexity of step $k$ is in $O(|x||A| \log
  |A|_Q)$.

  Since there are at most $\log |x|$ steps in the recursion, this
  leads to an overall time complexity in $O(|x| |A|)$ if $A$ is
  acyclic and $O(|x| |A| \log |A|_Q \log |x|)$ in general.\qed
\end{proof}
When the loop-connectedness of $A$ is $k$, the time complexity can
be improved to $O(k |x| |A| \log |x|)$ in the general case.

\section{Conclusion}

We presented general algorithms for computing in linear space both the
edit-distance between a string and a finite automaton and their
optimal alignment. Our algorithms are conceptually simple and make use
of existing generic algorithms. Our results further provide a better
understanding of previous algorithms for more restricted automata by
relating them to shortest-distance algorithms and general queue
disciplines.

\bibliographystyle{abbrv}
\bibliography{lse}
\end{document}